\newtheorem{theorem}{Theorem}[section]
\newtheorem{lemma}[theorem]{Lemma}
\newtheorem{definition}{Definition}[section]
\title{Bailey pairs for the q-hypergeometric integral pentagon identity}
\author{Ilmar Gahramanov$^{a,b,c}$, and Osman Erkan Kaluc$^{a}$}
\affiliation{
	$^a$ {Department of Physics, Bogazici University, 34342 Bebek, Istanbul, Turkey}\\[-0.5cm]
	
	$^{b}$ Institute of Radiation Problems, Azerbaijan National Academy of Sciences, \\ B.Vahabzade St. 9, AZ1143, Baku, Azerbaijan\\[-0.5cm]
	
	$^{c}$ Department of Mathematics, Khazar University,  Mehseti St. 41, AZ1096, Baku, Azerbaijan \\[-0.5cm]
}
\emailAdd{ilmar.gahramanov@boun.edu.tr}
\emailAdd{erkan.kaluc@boun.edu.tr}
\abstract{In this work, we construct new Bailey pairs for the integral pentagon identity in terms of q-hypergeometric functions. The pentagon identity considered here represents equality of the partition functions of a certain three-dimensional supersymmetric dual theories. It can be also interpreted as the star-triangle relation for the Ising-type integrable lattice model.}
\keywords{trigonometric hypergeometric function, star-triangle relation, pentagon
	identity, supersymmetric duality, Bailey pairs}
\begin{document}
	\maketitle
	\flushbottom

\section{Introduction}

Bailey’s lemma \cite{bailey1946some,bailey1948identities} is a powerful tool to derive hypergeometric identities (ordinary, trigonometric, and elliptic type). In this work, we construct new integral Bailey pairs for the pentagon identity in terms of q-hypergeometric functions. The pentagon identity can be interpreted as a Pachner's $3$--$2$ move for triangulated three-dimensional manifolds. Such identities also play a role in the study of supersymmetric gauge theories, integrable models, knot theory, etc.\footnote{See some recent works \cite{Kashaev:2012cz,Kashaev:2014rea,kashaev2014euler,Gahramanov:2013rda,Gahramanov:2014ona,Gahramanov:2016wxi,Bozkurt:2018xno,Bozkurt:2020gyy,Jafarzade:2018yei}.}

Let $q, z \in \mathbb{Z}$ with $|q| < 1$. We define the infinite q-product
\begin{align}
    (z;q)_{\infty}:=\prod^{\infty}_{k=0}(1-zq^k) \;.
\end{align}
We also adopt the following convention
\begin{align}
    (a,b;q)_{\infty}:=(a,q)_{\infty}(b,q)_{\infty} \; .
\end{align}
\begin{theorem}
Let $a_1, a_2, a_3, b_1, b_2, b_3, q \in \mathbb{C}$ and integers $m_i, n_i \in \mathbb{Z}$. Then
\begin{align} \nonumber
     &\sum_{m \in \mathbb{Z}} \int_{\mathbb{T}} \frac{dz}{2\pi iz} (-q^{\frac{1}{2}})^{\sum^{3}_{i=1}\frac{|m_i+m|}{2}+\frac{|n_i-m|}{2}}z^{-\sum^{3}_{i=1}(\frac{|m_i+m|}{2}-\frac{|n_i-m|}{2})} \\ \nonumber
     & \qquad \qquad \times \prod^{3}_{i=1}a_i^{-\frac{|m_i+m|}{2}}b_i^{-\frac{|n_i-m|}{2}}\frac{(q^{1+\frac{|m_i+m|}{2}}\frac{1}{a_iz},q^{1+\frac{|n_i-m|}{2}}\frac{z}{b_i};q)_{\infty}}{(q^{\frac{|m_i+m|}{2}}a_iz,q^{\frac{|n_i-m|}{2}}\frac{b_i}{z};q)_{\infty}}\\
     &=(-q^{\frac{1}{2}})^{\sum^{3}_{i,j=1}\frac{|m_i+n_j|}{2}}\prod^{3}_{i,j=1}(a_ib_j)^{-\frac{|m_i+n_j|}{2}}\frac{(q^{1+\frac{|m_i+n_j|}{2}}\frac{1}{a_ib_j};q)_{\infty}}{(q^{\frac{|m_i+n_j|}{2}}a_ib_j;q)_{\infty}}
     \label{relation}
     \;\text{,}
\end{align}

where the balancing conditions are
\begin{align}
\label{balance1}
\prod^{3}_{i=1}a_i b_i=q  \; \text{,}
\end{align}
\begin{align}
\sum^{3}_{i=1}m_i+n_i=0 \; \text{,}
\label{balance2}
\end{align} 
and $\mathbb{T}$ represents the positively oriented unit circle.
\end{theorem}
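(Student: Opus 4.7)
The plan is to recognise both sides of \eqref{relation} as superconformal indices of a three-dimensional $\mathcal{N}=2$ Seiberg-dual pair on $S^{2}\times S^{1}$, and then to derive the integral identity either by a limit from the elliptic level or by a direct residue/series manipulation.

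First I would introduce the elementary chiral block
\begin{equation*}
\mathcal{B}(x,m) := (-q^{1/2})^{|m|/2}\, x^{-|m|/2}\, \frac{(q^{1+|m|/2}/x;q)_\infty}{(q^{|m|/2}x;q)_\infty},
\end{equation*}
so that the left-hand side of \eqref{relation} reads $\sum_{m\in\mathbb{Z}}\oint_{\mathbb{T}}\tfrac{dz}{2\pi iz}\prod_{i=1}^{3}\mathcal{B}(a_i z,\, m_i+m)\,\mathcal{B}(b_i/z,\, n_i-m)$, while the right-hand side is $\prod_{i,j=1}^{3}\mathcal{B}(a_i b_j,\, m_i+n_j)$. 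In this form the identity asserts the equality of the index of $U(1)$ SQED with three pairs of oppositely charged chirals and the index of its dual Wess--Zumino theory of nine mesons, and the balancing conditions \eqref{balance1}--\eqref{balance2} are precisely the R-charge and magnetic-symmetry anomaly-cancellation constraints that make the duality consistent.

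Next, assuming the fugacities are generic so that the integrand has only simple poles, I would evaluate the contour integral by residues. The interior poles come from the three factors $(q^{|m_i+m|/2}a_i z;q)_\infty$ at $z=q^{-k-|m_i+m|/2}/a_i$ for $k\ge 0$; the exterior poles arise symmetrically from the three $b_i/z$ factors, and the balancing condition $\prod_i a_i b_i=q$ is exactly what is needed to match the two residue expansions. Computing each residue and using the same balancing relation to simplify the prefactors yields a triple $q$-hypergeometric series in $(k,m,i)$. After interchanging the order of summation, which is permitted by $|q|<1$, the inner sum over $m\in\mathbb{Z}$ can be performed by Jacobi's triple product identity, with \eqref{balance2} ensuring that the $z$-independent phases combine to give the symmetric exponent $\sum_{i,j}|m_i+n_j|/2$ appearing on the right.

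The main obstacle will be the bookkeeping of the absolute values $|m_i+m|$ and $|n_i-m|$, whose slopes in $m$ change sign at finitely many integers, so the summand is only piecewise of a simple form. I would split the sum over $m$ into the regions where each sign is constant, apply the $q$-binomial theorem in each region, and then stitch the partial sums together using \eqref{balance2}. An alternative route that avoids the case analysis is to start from the elliptic pentagon identity of Spiridonov--Warnaar type and degenerate $p\to 0$ while tracking the integer-flux sum emerging from the quasi-periodicity of the elliptic gamma function; this is cleaner but at the price of invoking a stronger elliptic result.
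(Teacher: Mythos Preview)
The paper does not actually prove this theorem in the text: it quotes the identity and refers to \cite{Gahramanov:2016wxi} for the proof (under the slightly more restrictive balancing $\prod a_i=\prod b_i=q^{1/2}$, $\sum m_i=\sum n_i=0$), then remarks that the physical flavor symmetry $SU(3)\times SU(3)\times U(1)$ guarantees the extension to the full balancing \eqref{balance1}--\eqref{balance2}. So there is no in-paper argument to match step by step; what the paper does supply is exactly your physical interpretation (the $U(1)$ SQED/mesonic WZ duality) and one technical device you are missing.

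That device is the Dimofte--Gaiotto--Gukov identity
\[
\frac{(q^{1+|m|/2}/z;q)_\infty}{(q^{|m|/2}z;q)_\infty}
=(-q^{-1/2}z)^{(|m|-m)/2}\,
\frac{(q^{1+m/2}/z;q)_\infty}{(q^{m/2}z;q)_\infty},
\]
which the paper uses to strip all absolute values from the integrand and reduce \eqref{relation} to the cleaner form
\[
\sum_{m\in\mathbb Z}\oint_{\mathbb T}\frac{dz}{2\pi i z}\,\frac{1}{z^{3m}}
\prod_{i=1}^{3}\frac{(q^{1+\frac{m+m_i}{2}}/a_iz,\,q^{1+\frac{n_i-m}{2}}z/b_i;q)_\infty}
{(q^{\frac{m+m_i}{2}}a_iz,\,q^{\frac{n_i-m}{2}}b_i/z;q)_\infty}
=\frac{1}{\prod_i a_i^{m_i}b_i^{n_i}}\prod_{i,j=1}^{3}
\frac{(q^{1+\frac{m_i+n_j}{2}}/a_ib_j;q)_\infty}{(q^{\frac{m_i+n_j}{2}}a_ib_j;q)_\infty}.
\]
This completely removes what you call the ``main obstacle'' (the piecewise-in-$m$ behaviour from $|m_i+m|$, $|n_i-m|$), so your proposed splitting of the $m$-sum into sign regions is unnecessary once you apply this identity first. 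With absolute values gone, the residue evaluation becomes uniform in $m$ and one is in the standard setting of a basic bilateral sum that can be handled by Ramanujan's $_1\psi_1$ / $q$-binomial machinery rather than Jacobi's triple product. Your claim that the inner $m$-sum collapses via the triple product is the weakest point of the sketch: after taking residues the summand is a ratio of many $q$-Pochhammer symbols, not a bare $q^{m^2}z^m$, so you should either name the precise bilateral summation you intend (e.g.\ a $_1\psi_1$ or a degenerate $_6\psi_6$) or fall back on your alternative route via the $p\to 0$ limit of the elliptic beta integral, which is indeed how this identity is obtained in the literature you and the paper are invoking.
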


We would like to mention that the integral identity represents the supersymmetric duality for three-dimensional $\mathcal N=2$ supersymmetric gauge theories with the flavor symmetry\footnote{In this case parameters $a_i$ and $b_i$ stand for the flavor symmetry and $z$ is the fugacity for the $U(1)$ gauge group.}  $ SU(3) \times SU(3) \times U(1)$. This identity can also be written as the star-triangle relation\footnote{In this case parameters $a_i$, $b_i$ and $z$ stand for the continuous spin variables.} for some integrable model of statistical mechanics.

The proof of the form above is given in \cite{Gahramanov:2016wxi} for the balancing conditions\footnote{Yet, as $SU(3) \times SU(3) \times U(1)$ has five independent parameters, the above form must be correct even for the more general balancing conditions in (\ref{balance1},\ref{balance2}).}
\begin{align}
    \prod^{3}_{i=0}a_i= \prod^{3}_{i=0}b_i=q^{\frac{1}{2}} \; \text{,} \; \;\; \; \;\; \; \;  \sum^{3}_{i=0}m_i=\sum^{3}_{i=0}n_i=0 \; \text{.}
\end{align}

The absolute values can be eliminated by the identity \cite{Dimofte:2011py}
\begin{align}
    \frac{(q^{1+\frac{|m|}{2}}/z;q)_{\infty}}{(q^{\frac{|m|}{2}}z;q)_{\infty}}=(-q^{-\frac{1}{2}}z)^{\frac{|m|-m}{2}}\frac{(q^{1+\frac{m}{2}}/z;q)_{\infty}}{(q^{\frac{m}{2}}z;q)_{\infty}} \; \text{,}
\end{align}
and one ends up with the following $q$-hypergeometric sum/integral identity \cite{Gahramanov:2013rda,Gahramanov:2014ona,Gahramanov:2016wxi}
\begin{align}
    \sum_{m \in \mathbb{Z}} \int_{\mathbb{T}} \prod^{3}_{i=1}\frac{(q^{1+\frac{m+m_i}{2}}\frac{1}{a_iz},q^{1+\frac{n_i-m}{2}}\frac{z}{b_i};q)_{\infty}}{(q^{\frac{m+m_i}{2}}a_iz,q^{\frac{n_i-m}{2}}\frac{b_i}{z};q)_{\infty}}\frac{1}{z^{3m}}\frac{dz}{2\pi iz}=\frac{1}{\prod^{3}_{i=1}a^{m_i}_ib^{n_i}_i}\prod^{3}_{i,j=1}\frac{(q^{1+\frac{m_i+n_j}{2}}\frac{1}{a_ib_j};q)_{\infty}}{(q^{\frac{m_i+n_j}{2}}a_ib_j;q)_{\infty}} \;.
\end{align}

\section{Integral pentagon identity}

In \cite{Gahramanov:2016wxi,Gahramanov:2014ona,Gahramanov:2013rda} it was shown that the identity (\ref{relation}) can be written as an integral pentagon identity
\begin{align} \nonumber
    \sum_{m \in  \mathbb{Z}} \int_{\mathbb{T}} \frac{dz}{2\pi i z} &\prod^3_{i=1}\mathcal{B}[a_i, n_i+m; b_iz^{-1}, m_i-m]\\ \label{mainpent}
    =\mathcal{B}[a_1b_2, n_1+m_2; a_3b_1; n_3+m_1]&\mathcal{B}[a_2b_1, n_2+m_1; a_3b_2, n_3+m_2] \; \text{,}
\end{align}
where we define the following function as
\begin{align}
    \mathcal{B}_m[a,n;b,m]=(-q^{\frac{1}{2}})^{\frac{|n|}{2}+\frac{|m|}{2}+\frac{|n+m|}{2}}a^{-\frac{|n|}{2}}b^{-\frac{|m|}{2}}(ab)^{\frac{|n+m|}{2}} \times \frac{(q^{1+\frac{|n|}{2}}a^{-1}, q^{1+\frac{|m|}{2}}b^{-1}, q^{\frac{|n+m|}{2}}ab;q)_{\infty}}{(q^{\frac{|n|}{2}}a, q^{\frac{|m|}{2}}b, q^{1+\frac{|n+m|}{2}}(ab)^{-1};q)_\infty} \; \text{.}
\end{align}

In a general sense, any algebraic relation for operators $\mathcal B$
\begin{align}
    \mathcal{B}\mathcal{B}\mathcal{B}=\mathcal{B}\mathcal{B}
\end{align}
which can be interpreted as a 2-3 Pachner move of a triangulated three-dimensional manifold is called a  pentagon relation \cite{Kashaev:2014rea,kashaev2014euler}. Note that the integral pentagon identity (\ref{mainpent}) for the $\mathcal N=2$ supersymmetric $S^2 \times S^1$ partition functions is supposed to be related to some topological invariant of corresponding $3$–manifold via $3d$--$3d$ correspondence \cite{Dimofte:2011ju,Dimofte:2011py} that connects three-dimensional $\mathcal N = 2$ supersymmetric theories and triangulated 3–manifolds. There are several examples of pentagon identities arising from supersymmetric gauge theory computations, see, e.g. \cite{Dimofte:2011ju,Dimofte:2011py,Gang:2013sqa,Benvenuti:2016wet,Gahramanov:2013rda,Gahramanov:2014ona,Gahramanov:2016wxi,Bozkurt:2018xno,Jafarzade:2018yei,Bozkurt:2020gyy}.

\section{Bailey pairs}

Rogers-Ramanujan type identities are being continuously used in the solution of the integrable models, namely to derive the Yang-Baxter and the pentagon identities. In fact, a well-known example of this usage is conducted during the investigations of the hard hexagon model by Baxter. It turns out that Bailey discovered a systematic way to derive these types of identities \cite{bailey1946some, bailey1948identities,warnaar200150, 2019Zudilin}. As generalized by Andrews \cite{andrews1984multiple,andrews2001bailey}, there exists an iterative scheme to derive infinitely many of these identities if one pair, called a Bailey pair is known. This form the so-called Bailey chain. The induction step of generating the particular Bailey pairs is referred as the Bailey lemma for the chain we consider.

A generalization of the Bailey pairs approach to the integral identities is firstly done by Spiridonov in \cite{spiridonov2002elliptic,spiridonov2004bailey}. The construction of integral Bailey pairs yields new powerful verifications of various supersymmetric dualities \cite{Brunner:2016uvv,Brunner:2017lhb}, generating solutions to the Yang-Baxter equation \cite{Derkachov:2012iv,Gahramanov:2015cva,Magadov:2018hlc,Spiridonov:2019uuw}, etc.

Accordingly, the generalized version of the Bailey chain is a couple of infinite sequences of holomorphic functions $\{\alpha^{(i)}_n\}_{n \geq 0}$ and $\{\beta^{(i)}_n\}_{n \geq 0}$ such that there exists an identity independent of \textit{i} which connect $\alpha^{(i)}_n$ and $\beta^{(i)}_n$ as
\begin{equation}
    \beta^{(i)}_n=F_n(\alpha^{(i)}_0, \alpha^{(i)}_1,..., \alpha^{(i)}_n) \; \text{,}
\end{equation}
where $F$ can be an operator which may now include sum or integrals. Here, $\alpha^{(i)}_n$ and $\beta^{(i)}_n$ are constructed according to
\begin{align}
    \alpha^{(i)}_n=G(\alpha^{(i)}_0, \alpha^{(i)}_1,..., \alpha^{(i)}_{n-1}) \; \text{,}\\
    \beta^{(i)}_n=H(\beta^{(i)}_0, \beta^{(i)}_1,..., \beta^{(i)}_{n-1}) \; \text{,}
\end{align}
where $G$ and $H$ represent integral-sum operators.

\begin{definition}
Let $\{\alpha_m(z;t)\}_{m \in \mathbb{Z}}$ and $\{\beta_m(z;t)\}_{m \in \mathbb{Z}}$ be two sequences of functions. They are said to form a Bailey pair with respect to the parameter $t$ iff
\begin{align}
\label{bailey1}
    \beta_m(w;t)=\sum_{n \in \mathbb{Z}}\int dz \; \mathcal{B}[twz^{-1},m-n+n_t, tw^{-1}z,-m+n+n_t] \; \alpha_n(z;t) \; \text{.}
\end{align}

\end{definition}

\begin{lemma}
If $\{\alpha_m(z;t)\}_{m \in \mathbb{Z}}$ and $\{\beta_m(z;t)\}_{m \in \mathbb{Z}}$ form a Bailey pair with respect to $t$, then the following sequences
\begin{align}
     &\alpha'_n(w;st)=\mathcal{B}[tuw,n+n_u+n_t,s^2,2n_s]\alpha_n(w;t)\\ \nonumber
     &\beta'_n(w;st)=\sum_{m \in \mathbb{Z}}\int \frac{dx}{2\pi i x} \mathcal{B}[swx^{-1},-m+n+n_s;ux,n_u+m]\\ 
     & \qquad \qquad \qquad \qquad \times\mathcal{B}[st^{2}uw,n+2n_t+n_u+n_s,sw^{-1}x,-n+m+n_s]\beta_m(x;t)
     \label{bailey2}
\end{align}
\vspace{-0.5cm}
form a Bailey pair with respect to the parameter $st$.
\end{lemma}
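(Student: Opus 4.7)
The goal is to verify that the proposed pair $\{\alpha'_n(w;st)\}$, $\{\beta'_n(w;st)\}$ satisfies the defining Bailey pair relation (\ref{bailey1}) with $t$ replaced by $st$ (and $n_t$ replaced by $n_s+n_t$). The plan is a standard substitute-and-reduce argument: I would expand both sides of (\ref{bailey1}) for the new pair in terms of the original $\alpha_n(z;t)$, and then match the two resulting kernels by a single application of the integral pentagon identity (\ref{mainpent}).

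Concretely, I would first write $\beta'_m(w;st)$ using its definition (\ref{bailey2}) and immediately substitute the Bailey pair relation for $\beta_k(x;t)$ itself, so that $\beta'_m(w;st)$ becomes a triple sum-integral in $(k,x)$ and $(n,z)$ with three $\mathcal{B}$-factors acting on $\alpha_n(z;t)$. After interchanging the order of summations and integrations (legitimate under the usual absolute convergence of these $q$-Pochhammer integrands on $\mathbb{T}$), I would then rewrite the proposed RHS $\sum_n \int dz\,\mathcal{B}[stwz^{-1},m-n+n_s+n_t;stw^{-1}z,-m+n+n_s+n_t]\alpha'_n(z;st)$ and plug in the definition of $\alpha'_n$ to get an expression of the same form but with only two $\mathcal{B}$-factors multiplying $\alpha_n(z;t)$. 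Since $\alpha_n(z;t)$ is arbitrary within the family, matching kernels reduces the lemma to the identity
\begin{align*}
&\sum_{k\in\mathbb{Z}}\int\frac{dx}{2\pi ix}\,\mathcal{B}[swx^{-1},m-k+n_s;ux,n_u+k]\,\mathcal{B}[st^2uw,m+2n_t+n_u+n_s;sw^{-1}x,-m+k+n_s] \\
&\qquad \times \mathcal{B}[txz^{-1},k-n+n_t;tx^{-1}z,-k+n+n_t] \\
&= \mathcal{B}[stwz^{-1},m-n+n_s+n_t;stw^{-1}z,-m+n+n_s+n_t]\,\mathcal{B}[tuz,n+n_u+n_t;s^2,2n_s].
\end{align*}
I would then recognize this as the pentagon identity (\ref{mainpent}) by identifying the six pentagon parameters $a_i,b_i$ and the integer labels $n_i,m_i$; matching the target RHS against $\mathcal{B}[a_1b_2,n_1+m_2;a_3b_1,n_3+m_1]\,\mathcal{B}[a_2b_1,n_2+m_1;a_3b_2,n_3+m_2]$ yields relations such as $a_1b_2=stwz^{-1}$, $a_3b_1=stw^{-1}z$, $a_2b_1=tuz$ and $a_3b_2=s^2$. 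Together with the balancing conditions (\ref{balance1}) and (\ref{balance2}) these fix all pentagon data, and the resulting three LHS factors of (\ref{mainpent}) then have to be checked against our triple product (using the symmetry $\mathcal{B}[a,n;b,m]=\mathcal{B}[b,m;a,n]$ where needed).

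The main obstacle I anticipate is the bookkeeping in the pentagon match: the first and third $\mathcal{B}$-factors on the LHS of the kernel identity carry the integration variable $x$ in both complex slots through combinations like $(swx^{-1},ux)$ and $(txz^{-1},tx^{-1}z)$, so one has to choose the pentagon labels so that, after applying the $\mathcal{B}$-symmetry to each factor, the three $\mathcal{B}$'s of (\ref{mainpent}) line up exactly with our three. Once the identification is pinned down, verifying $\prod_i a_ib_i=q$ and $\sum_i(m_i+n_i)=0$ is a direct algebraic check that records how the parameters $s,t,u$ and the shifts $n_s,n_t,n_u$ combine in the new Bailey pair with respect to $st$; the rest of the argument is formal.
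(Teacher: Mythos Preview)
Your approach is the paper's: substitute the original Bailey relation into $\beta'_n$, swap the sum--integrals, and collapse the resulting kernel identity via a single use of the pentagon identity. Two small points the paper makes explicit that you should anticipate: (i) the middle factor $\mathcal{B}[st^2uw,\ldots;sw^{-1}x,\ldots]$ does not fit the pentagon LHS shape $\mathcal{B}[a_ix,\ldots;b_ix^{-1},\ldots]$ under the swap symmetry $\mathcal{B}[a,n;b,m]=\mathcal{B}[b,m;a,n]$ alone, so the paper instead expands all three $\mathcal{B}$'s into their constituent $q$-Pochhammer ratios and matches against (\ref{relation}) directly (equivalently, you would need the full triality of $\mathcal{B}$, not just the $\mathbb{Z}_2$ swap); (ii) the integer balancing check $\sum_i(m_i+n_i)=0$ is not automatic---it forces the constraint $n_u+n_s+n_t=0$ on the discrete shifts, which the paper records explicitly as an additional hypothesis emerging from the proof.
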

\begin{proof}
We have to show that
\begin{align}
    \beta_n'(w,st)=\sum_{p \in \mathbb{Z}}\int \mathcal{B}[stwy^{-1},n-p+n_s+n_t,sty^{-1}x,-n+p+n_s+n_t]\alpha'_p(y,st)dy \;.
    \label{bailey3}
\end{align}

Inserting (\ref{bailey1}) in (\ref{bailey2}), we first calculate the left hand side of the equality (\ref{bailey3})
\begin{align} \nonumber
    &\beta'_n(w;st)=\sum_{m \in \mathbb{Z}}\int_{\mathbb{T}}\frac{dx}{2\pi i x} \mathcal{B}[swx^{-1},n+n_s-m,ux,m+n_u]\; \mathcal{B}[st^2uw,n+n_u+2n_t+n_s,sw^{-1}x,m-n+n_s]\\ \nonumber
    & \qquad \qquad \qquad \qquad \times \sum_{p \in \mathbb{Z}} \int_{\mathbb{T}}\mathcal{B}[txy^{-1},m-p+n_t,tx^{-1}y,-m+p+n_t]\alpha_p(y,t)dy\\ \nonumber
    &=\sum_{m \in \mathbb{Z}}\sum_{p \in \mathbb{Z}}\int \int \mathcal{B}[swx^{-1},-m+n+n_s,ux,m+n_u] \; \mathcal{B}[st^2uw,n+n_u+2n_t+n_s,sw^{-1}x,-n+m+n_s]\\
     & \qquad \qquad \qquad \qquad \times\mathcal{B}[txy^{-1},m-p+n_t,tx^{-1}y,-m+p+n_t]\alpha_p(y,t)dy\frac{dx}{2\pi i x}
\end{align}

Hence, by regrouping the terms accordingly, we obtain\footnote{For convenience $q$ of the $q$-product is omitted.}
\begin{align} \nonumber
    &\sum_{p \in \mathbb{Z}} \sum_{m \in \mathbb{Z}} \int (-q^{\frac{1}{2}})^{\frac{|m+n_u|}{2}+\frac{|m-n+n_s|}{2}+\frac{|m-p+n_t|}{2}+\frac{|n-m+n_s|}{2}+\frac{|m-n_u|}{2}+\frac{|p-m+n_t|}{2}}\\ \nonumber
    & \times (ux)^{-\frac{|m+n_u|}{2}}(sw^{-1}x)^{-\frac{|m-n+n_s|}{2}}(ty^{-1}x)^{-\frac{|m-p+n_t|}{2}} 
    (swx^{-1})^{-\frac{|n-m+n_s|}{2}}(s^{2}t^{2}q^{-1}ux)^{\frac{|m-n_u|}{2}}(tyx^{-1})^{-\frac{|p-m+n_t|}{2}}\\ \nonumber
    & \times     \frac{(q^{1+\frac{|n-m+n_s|}{2}}(swx^{-1})^{-1})_{\infty}}{(q^{\frac{|n-m+n_s|}{2}}swx^{-1})_{\infty}}\frac{( q^{1+\frac{|m+n_u|}{2}}(ux)^{-1} )_{\infty}}{(q^{\frac{|m+n_u|}{2}}ux, )_\infty}\frac{(q^{1+\frac{|m-n+n_s|}{2}}(sw^{-1}x)^{-1})_{\infty}}{(q^{\frac{|m-n+n_s|}{2}}sw^{-1}x)_{\infty}}\\ \nonumber
    &\times \frac{(q^{1+\frac{|m-n_u|}{2}}s^{2}t^{2}q^{-1}ux)_{\infty}}{(q^{\frac{|m-n_u|}{2}}s^{-2}t^{-2}qu^{-1}x^{-1})_\infty}\frac{(q^{1+\frac{|m-p+n_t|}{2}}(ty^{-1}x)^{-1})_{\infty}}{(q^{\frac{|m-p+n_t|}{2}}ty^{-1}x)_{\infty}}\frac{(q^{1+\frac{|p-m+n_t|}{2}}(tyx^{-1})^{-1})_{\infty}}{( q^{\frac{|p-m+n_t|}{2}}tyx^{-1})_{\infty}}\\ \nonumber
    &\times (-q^{\frac{1}{2}})^{-\frac{|n-n_t|}{2}+|n_t|+\frac{|n+n_t|}{2}}(swu)^{\frac{|n-n_t|}{2}}(st^2uw)^{-\frac{|n+n_t|}{2}} (q^{-1}t^2)^{|n_t|}\\
    &\times \frac{(q^{1+\frac{|n-n_t|}{2}}swq^{-1}u)_{\infty}}{(q^{\frac{|n-n_t|}{2}}(swq^{-1}u)^{-1})_{\infty}}\frac{ (q^{1+\frac{|n+n_t|}{2}}(st^2uw)^{-1})}{(q^{\frac{|n+n_t|}{2}}st^2uw)_{\infty}}\frac{(q^{1+|n_t|}q^{-1}t^2)_{\infty}}{(q^{|n_t|}qt^{-2})_\infty}\alpha_p(y,t)dy\frac{dx}{2\pi i x}
\end{align}
where we required the sum of the powers of $x$ to vanish, namely
\begin{align}
    n_u+n_s+n_t=0 \label{constraint}
\end{align}

Upon renaming the variables as
  \begin{align}
        &a_1=u \to m_1=n_u \qquad \qquad \qquad \qquad b_1=sw \to n_1=n+n_s\\
        &a_2=sw^{-1} \to m_2=-n+n_s \; \; \;\; \;\; \;\; \;\; \; \;\; \; b_2=qs^{-2}t^{-2}u^{-1} \to n_2=n_u\\
        &a_3=ty^{-1} \to m_3=-p+n_t \qquad \;\;\; \;\; \; \;\; \; b_3=tx \to n_3=p+n_t
    \end{align}
we identify the integral relation (\ref{relation}). Also observe that the constraint (\ref{constraint}) resulted in the balancing condition (\ref{balance2}). We hence get upon simplification and regrouping of the terms
\begin{align}
    \nonumber
    &\sum_{p \in \mathbb{Z}}\int \alpha_p(y,t)dy(-q^{\frac{1}{2}})^{\frac{|n-p-n_u|}{2}+\frac{|-n+p-n_u|}{2}-|n_u|}(stwy^{-1})^{-\frac{|n-p-n_u|}{2}}(stw^{-1}y)^{-\frac{|p-n-n_u|}{2}}(s^2t^2)^{|n_u|}\\
     \nonumber
    &\times \frac{(q^{1+\frac{|n-p-n_u|}{2}}(stwy^{-1})^{-1})_{\infty}}{(q^{\frac{|n-p-n_u|}{2}}stwy^{-1})_{\infty}}\frac{(q^{1+\frac{|p-n-n_u|}{2}}(stw^{-1}y)^{-1})_{\infty}}{(q^{|p-n-n_u|}stw^{-1}y)_{\infty}}\frac{(q^{\frac{|n_u|}{2}}s^2t^2)_{\infty}}{(q^{1+\frac{|n_u|}{2}}s^{-2}t^{-2})_{\infty}}\\
     \nonumber
    &\times (-q^{\frac{1}{2}})^{\frac{|p-n_s|}{2}+|n_s|-\frac{|n_s+p|}{2}}(tuy)^{-\frac{|p-n_s|}{2}}(s^2)^{-|n_s|}(s^2tuy)^{\frac{|p+n_s|}{2}}\\
        &\times \frac{(q^{1+\frac{|p-n_s|}{2}}(tuy)^{-1})_{\infty}}{(q^{\frac{|p-n_s|}{2}}tuy)_{\infty}}\frac{(q^{1+|n_s|}s^{-2})_{\infty}}{(q^{|n_s|}s^2)_{\infty}}\frac{(q^{1+\frac{|p+n_s|}{2}}s^2tuy)_{\infty}}{(q^{\frac{|p+n_s|}{2}}(s^{2}tuy)^{-1})_{\infty}}\; \text{,}
\end{align}
which is the desired operator equality
\begin{align}
 \nonumber
     &\sum_{p \in \mathbb{Z}} \int dy\mathcal{B}[stwy^{-1},n_s+n_t+n-p,stw^{-1}y,n_s+n_t-n+p)\mathcal{B}[tyu,n_t+p+n_u,s^2,2n_s]\\
    &=\sum_{p \in \mathbb{Z}}\int dy \mathcal{B}[stwy^{-1},n_s+n_t+n-p,stw^{-1}y,n_s+n_t-n+p)
    \; \text{.}
\end{align}

\end{proof}

\section{Conclusions}

In this work, we constructed new integral Bailey pairs for pentagon identity in terms of $q$-hypergeometric functions. One can use this construction to obtain new supersymmetric dualities for quiver theories. 

We would like to mention that the pentagon identity presented here can also be written as the star-triangle relation for some integrable lattice model of statistical mechanics. It would be interesting to construct the Bailey pairs corresponding to the star-triangle form of the same integral identity. 

\section*{Acknowledgements} 

The authors are grateful to Mustafa Mullahasanoglu, Sena Ergisi, Dilara Kosva, and Batuhan Keskin for valuable discussions related to the construction of the Bailey pairs and pentagon identities. The work of Ilmar Gahramanov is partially supported by the Bogazici University Research Fund under grant number 20B03SUP3 and TUBITAK grant 220N106.

\bibliographystyle{utphys}
\bibliography{pentagon}

\end{document}